\theoremstyle{plain}
\newtheorem{theorem}{Theorem}
\newtheorem{proposition}[theorem]{Proposition}
\newtheorem{corollary}[theorem]{Corollary}
\newtheorem{remark}[theorem]{Remark}
\DeclareMathOperator{\tho}{\text{\textthorn}}
\DeclareMathOperator{\edt}{\eth}
\def\Im{\textrm{Im}}
\def\Re{\textrm{Re}}
\def\InvSymb{\mathbb{I}}
\def\Rotation{\mathbf{a}}
\def\Mass{\mathbf{M}}
\def\Nut{\mathbf{N}}
\def\Cpara{\mathbf{c}}
\begin{document}

\preprint{}

\title[Gauge invariants on Kerr]{All local gauge invariants for perturbations of the Kerr spacetime}

\author{Steffen Aksteiner}
\email{steffen.aksteiner@aei.mpg.de}
\affiliation{Albert Einstein Institute, Am M\"uhlenberg 1, D-14476 Potsdam,
Germany}
\affiliation{Department of Mathematics, Princeton University, Princeton, NJ 08544-1000 USA}
\author{Thomas B\"ackdahl}
\email{thomas.backdahl@aei.mpg.de}
\affiliation{Albert Einstein Institute, Am M\"uhlenberg 1, D-14476 Potsdam,
Germany}
\affiliation{Mathematical Sciences, Chalmers University of Technology and University of Gothenburg, SE-412 96 Gothenburg, Sweden}

\date{\today}

\begin{abstract}
We present two complex scalar gauge invariants for perturbations of the Kerr spacetime defined covariantly in terms of the Killing vectors and the conformal Killing-Yano tensor of the background together with the linearized curvature and its first derivatives. These invariants are in particular sensitive to variations of the Kerr parameters. 
Together with the Teukolsky scalars and the linearized Ricci tensor they form a minimal set that generates all local gauge invariants. 
We also present curvature invariants that reduce to the gauge invariants in linearized theory.
\end{abstract}

\pacs{04.70.Bw, 04.20.Fy}% PACS, the Physics and Astronomy Classification Scheme.
\keywords{Black holes, gauge invariants, metric perturbations}%Use showkeys class option if keyword display desired
\maketitle

\addcontentsline{toc}{section}{Introduction}
\emph{Introduction.---}
Black hole perturbation theory plays a major role in numerical and analytical investigations of general relativity.
The coordinate freedom or diffeomorphism invariance of the Einstein equations implies gauge freedom for the linearized equations. For many applications it is essential to extract the gauge invariant content of the theory. The aim of this letter is to describe all local gauge invariants for perturbations of the rotating Kerr black hole.

The dynamics of gravitational perturbations of the Schwarzschild geometry is governed by the Regge-Wheeler \cite{regge:wheeler:1957} and Zerilli \cite{zerilli:1970} variables, see \cite{moncrief:1974, jezierski:1999} for a gauge invariant formulation, but also by the Bardeen-Press \cite{1973JMP....14....7B} variables. Gauge invariants of higher than second differential order have also been used in the literature, see e.g. \cite{2014PhRvL.112s1101D} for third order quantities and \cite{2016arXiv161108291S} for a set of higher order gauge invariants and their relations on Schwarzschild. The construction of the Bardeen-Press invariants has been generalized to the Kerr geometry by Teukolsky \cite{teukolsky:1973} and Wald showed in \cite{1973JMP....14.1453W} that one complex Teukolsky scalar determines the linearized gravitational field up to unphysical solutions and Kerr parameter variations. Motivated by the self-force problem, Merlin et. al. \cite{PhysRevD.94.104066} recently constructed three more real, third order scalar invariants and in \cite{2016arXiv160106084A} we found a third order gauge invariant vector field. Here we take a different perspective by asking: in terms of which variables can the gauge invariant content of the theory be described? As will be demonstrated in \cite{future:CompComplex}: there exist a finite number of invariants from which all invariants can be constructed by further differentiation and linear combination. This opens up the possibility of a systematic investigation of the field equations and differential relations satisfied by the gauge invariant quantities.

Linearized diffeomorphisms generated by a real vector field $\nu^a$ change the linearized metric $h_{ab}$ according to
\begin{align} \label{eq:hgauge}
h_{ab} \to h_{ab} + 2\nabla_{(a}\nu_{b)}.
\end{align} 
Depending on the background geometry, certain linear combinations of derivatives of $h_{ab}$ can be constructed to be independent of $\nu^a$ under the transformation \eqref{eq:hgauge}. Such quantities are called \textbf{local gauge invariants}, and play a fundamental role in black hole perturbation theory. Also non-local gauge invariants, often formulated in terms of separated modes or global integrals can be of interest, but here we restrict our attention to local quantities.

Stewart \& Walker \cite{1974RSPSA.341...49S} showed that any linearized field $\dot T$ around a given background, transforms under \eqref{eq:hgauge} with the Lie derivative along $\nu^a$ of its background value $T_0$,
\begin{align} \label{eq:Tgauge}
\dot T \to \dot T + \mathcal{L}_\nu T_0.
\end{align}
This in particular implies that the linearized Ricci tensor is gauge invariant on vacuum backgrounds. 

Any linear differential operator applied to a gauge invariant is also gauge invariant. Therefore, we say that a set of gauge invariants is \textbf{generating} if all gauge invariants can be expressed as a linear combinations of differential operators on elements of this set. For instance, the linearized curvature tensor is gauge invariant for perturbations of Minkowski space and one can show that it forms a generating set. 

We would like elements of the generating set to be of as low differential order as possible and also \textbf{minimal} in the sense that the removal of any element implies loss of information about the gauge invariant content, or equivalently that the set is no longer generating. 
Observe however, that the elements satisfy differential relations. For example, the set of components of the linearized curvature tensor on Minkowski space is minimal generating, but the elements are related by the differential Bianchi identities. 

For perturbations of the Kerr spacetime the two complex Teukolsky scalars (and their derivatives) together with the linearized Ricci tensor are well known gauge invariants. In this letter we add two complex scalar fields, $\InvSymb_\xi$ and $\InvSymb_\zeta$, to this list of local gauge invariants (collectively they can be found in \cite{PhysRevD.94.104066}, \cite{2016arXiv160106084A}, see remarks \ref{rem:ConnectionToImA} and \ref{rem:ConnectionToMerlinEtAl}, but here they are identified as a generating set for the first time). Their construction involves the Killing vectors $\xi^a, \zeta^a$, see proposition~\ref{prop:IvGI} below. The main result of this letter is the statement of a minimal generating set of gauge invariants in theorem~\ref{thm:CompleteSet}. 

Finally we present curvature invariants in the non-linear theory that reduce to constants on a Kerr background and to $\InvSymb_\xi$ and $\InvSymb_\zeta$ in linear theory. 

The proof that the set is generating is based on a universal compatibility complex of the Killing operator, see \cite{CompatComplexKhavkine}, and will be published separately in \cite{future:CompComplex}. For examples of the method on Lorentzian manifolds see e.g. \cite{2014arXiv1409.7212K}. The computations for this letter were performed with \emph{xAct} for \emph{Mathematica}, in particular using the package \emph{SpinFrames} \cite{SpinFrames}.

\medskip
\addcontentsline{toc}{section}{Geometry of Kerr}
\emph{Geometry of Kerr.---}
We use abstract index notation and let $g_{ab}$ denote the background Kerr metric with parameters $\Rotation$ and $\Mass$. Unless otherwise stated, frame dependent statements are valid in any principal Newman-Penrose tetrad $(l^a,n^a,m^a,\bar m^a)$ on the background. For clarity some coordinate expressions are given in Boyer-Lindquist like coordinates $(t,r,x=\cos\theta, \phi)$.

Let $\mathcal{Y}_{ab}$ be the anti-self dual conformal Killing-Yano tensor of the Kerr spacetime, see \cite{walker:penrose:1970CMaPh..18..265W}, normalized so that
\begin{align}
\xi^{c}={}&\tfrac{2}{3}i \nabla_{a}\mathcal{Y}^{ca}
\end{align}
is the real Killing vector $\partial_t$.
Furthermore, let
\begin{align}
\mathit{p} \equiv{}& \sqrt{\mathcal{Y}_{bd} \mathcal{Y}^{bd}}= r - i \Rotation x, &
U_{a} \equiv{}& - \nabla_{a}\log(\mathit{p}).
\end{align}
We base our construction on the 2-form $\mathcal{Y}_{ab}$ which in any principal tetrad takes the form
\begin{align}
\mathcal{Y}_{ab}={}&i \mathit{p} (l_{[a}n_{b]}
 -  m_{[a}\overline{m}_{b]}).
\end{align}
A second, linearly independent Killing vector is given by
\begin{align}
\zeta^a\equiv{}& 2 \mathcal{Y}^{ab} \overline{\mathcal{Y}}_{bc} \xi^{c} - \tfrac{1}{4} (\mathit{p}^2 + \overline{\mathit{p}}^2) \xi^{a}
=\Rotation^2 (\partial_t)^{a}
 + \Rotation (\partial_\phi)^{a},
\end{align}
see \cite{1977CMaPh..56..277C} for details. For later reference, we note the reality conditions
\begin{align}
\Psi_{2} \mathit{p}^3 ={}& \bar\Psi_{2} \overline{\mathit{p}}^3=-\Mass,&
\overline{U}^{a} \overline{\mathcal{Y}}_{ab}={}&- U^{a} \mathcal{Y}_{ab}=\tfrac{1}{2}i \xi_{b},\label{eq:RealityCond}
\end{align}
and the differential relation
\begin{align}
\nabla_{a}U_{b}={}&2 U_{a} U_{b}
 -  \tfrac{1}{2} g_{ab} (\Psi_{2}
 + 2 U_{c} U^{c})
 -  \mathit{p}^{-2} \xi_{a} \xi_{b}\nonumber\\
& + 2 \bar\Psi_{2} \mathit{p}^{-2} \mathcal{Y}_{(a}{}^{c}\overline{\mathcal{Y}}_{b)c}.\label{eq:CDeU}
\end{align}

\medskip
\addcontentsline{toc}{section}{Metric perturbations}
\emph{Metric perturbations.---}
To avoid complications with the tetrad gauge freedom, we treat metric perturbations covariantly in the style of \cite{BaeVal15} and denote the variation of a field $F$ by $\dot F$. Define the following version of the linearized Riemann tensor 
\begin{align}
\dot{R}_{abcd}
\equiv{}&2 g_{f[d} \nabla_{c]}\nabla_{[a}h_{b]}{}^{f}
 + \tfrac{2}{3} R_{[ab]}{}^{f}{}_{[c}h_{d]f}
  -  \tfrac{2}{3} R^{f}{}_{[ab][c}h_{d]f},
\end{align}
which is the mean value of variations with all indices up and all indices down. The spin-0 and spin-1 parts of the linearized curvature can be expressed by 
\begin{subequations}
\begin{align}
6 \mathit{p}^2 \vartheta \Psi_{2}={}&(\dot{R}_{b}{}^{d}{}_{cd} \mathcal{Y}_{a}{}^{c}
 -  \dot{R}_{abcd} \mathcal{Y}^{cd}
 -  \dot{R}_{acbd} \mathcal{Y}^{cd}) \mathcal{Y}^{ab},\\
4 \mathit{p}^2 \mathcal{Z}_{ab}={}&-2 \mathcal{Y}^{cd} \dot{R}_{[a|cd|}{}^{f}\mathcal{Y}_{b]f}
 + 3 \mathcal{Y}^{cd} \dot{R}_{[a}{}^{f}{}_{|cd|}\mathcal{Y}_{b]f}.
\end{align}
\end{subequations}
In any principal tetrad, we get
\begin{subequations}
\begin{align}
\vartheta \Psi_{0}={}&\dot{R}_{lmlm},\\
\vartheta \Psi_{1}={}&\tfrac{1}{2} \dot{R}_{lmln}
 -  \tfrac{1}{2} \dot{R}_{lmm\bar{m}},\\
\vartheta \Psi_{2}={}&\tfrac{1}{6} \dot{R}_{lnln}
 -  \tfrac{1}{3} \dot{R}_{lnm\bar{m}}
 + \tfrac{1}{3} \dot{R}_{lm\bar{m} n}
 + \tfrac{1}{6} \dot{R}_{m\bar{m} m\bar{m}},\\
\vartheta \Psi_{3}={}&\tfrac{1}{2} \dot{R}_{ln\bar{m} n}
 -  \tfrac{1}{2} \dot{R}_{m\bar{m} \bar{m} n},\\
\vartheta \Psi_{4}={}&\dot{R}_{\bar{m} n\bar{m} n},\\
\mathcal{Z}_{ab}={}&-2 \vartheta \Psi_{3} l_{[a}m_{b]}
 + 2 \vartheta \Psi_{1} \overline{m}_{[a}n_{b]}.
\end{align}
\end{subequations}
Here, $\vartheta \Psi_{i}$ are the components of the linearized Weyl spinor $\vartheta\Psi_{ABCD}$ introduced in \cite{BaeVal15}, but the formulas above can be used as definitions. They are related to linearized Newman-Penrose Weyl scalars $\dot{\Psi}_i$, but compensated for their linearized tetrad gauge dependence,
\begin{align}
\vartheta \Psi_{0} ={}& \dot\Psi_0, \,
\vartheta \Psi_{1} ={} \dot\Psi_1 + \tfrac{3}{2} \Psi_{2}( m_{a} \dot l^{a}
 - l_{a} \dot m^{a}),\,
\vartheta \Psi_{2} ={} \dot\Psi_2, \nonumber\\
\vartheta \Psi_{4} ={}& \dot\Psi_4, \,
\vartheta \Psi_{3} ={} \dot\Psi_3 - \tfrac{3}{2} \Psi_{2} ( n_{a} \dot{\bar{m}}^{a}
 -  \bar{m}_{a} \dot n^{a} ).
\end{align}

\medskip
\addcontentsline{toc}{section}{The set of gauge invariants}
\emph{The set of gauge invariants.---}
We can now construct third order gauge invariants.
\begin{proposition} \label{prop:IvGI}
Let $V^a$ be a real Killing vector field and
\begin{align}
\InvSymb_{V}={}&\mathit{p}^2 W^{a}  \nabla_{a}(\mathit{p}^4\vartheta \Psi_{2})
 -  \tfrac{1}{2} \Re(\mathit{p}^6 \vartheta \Psi_{2} \nabla_{a}W^{a})\nonumber\\
&-2i \Im(\mathit{p}^6 U^{a} W^{b}  \mathcal{Z}_{ab})
 -  \tfrac{3}{2} \mathit{p}^6\Psi_{2} U^{a} W^{b}  h_{ab},\label{eq:InvV}
\end{align}
where the vector field
$W_{a} \equiv 2i \mathit{p}^{-3} V^{b} \mathcal{Y}_{ab} $
is assumed to satisfy the condition
\begin{align}
\overline{\mathit{p}}^3 \overline{U}_{[a}\overline{W}_{b]}={}&- \mathit{p}^3 U_{[a}W_{b]}.\label{eq:OrthogonalityCond}
\end{align}
Then $\InvSymb_{V}$ is a local gauge invariant. 
\end{proposition}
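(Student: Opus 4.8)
\emph{Proof plan.} The plan is to compute the change $\delta_\nu\InvSymb_V$ of \eqref{eq:InvV} induced by the gauge transformation \eqref{eq:hgauge} directly and verify that it vanishes for every vector field $\nu^a$. The only $\nu$-dependence enters through the three linearized building blocks $\vartheta\Psi_2$, $\mathcal{Z}_{ab}$ and $h_{ab}$, since $\mathit{p}$, $U_a$, $\mathcal{Y}_{ab}$, $\Psi_2$ and $W_a$ are background quantities. The metric term varies as $\delta_\nu h_{ab}=2\nabla_{(a}\nu_{b)}$, while the curvature terms are governed by the Stewart--Walker lemma \eqref{eq:Tgauge}: $\vartheta\Psi_2=\dot\Psi_2$ is the variation of the tetrad-invariant type~D scalar, so $\delta_\nu(\vartheta\Psi_2)=\mathcal{L}_\nu\Psi_2$, and the spin-1 tensor $\mathcal{Z}_{ab}$, being extracted from the linearized Weyl tensor, varies by the corresponding projection of $\mathcal{L}_\nu C_{abcd}$.

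First I would evaluate these variations explicitly on the background. Using $\Psi_2\mathit{p}^3=-\Mass$ from \eqref{eq:RealityCond} and $\nabla_a\mathit{p}=-\mathit{p}\,U_a$ from the definition of $U_a$, the scalar variation collapses to $\delta_\nu(\vartheta\Psi_2)=3\Psi_2\,U_a\nu^a$. The spin-1 variation $\delta_\nu\mathcal{Z}_{ab}$ is the main computation: since the background is type~D the spin-1 part of $C_{abcd}$ vanishes, so only the derivative of the background-valued extraction coefficients and the $\nabla\nu$ contractions survive, yielding an expression algebraic and first-order in $\nu$ with overall factor $\Psi_2$ and coefficients built from $\mathcal{Y}_{ab}$ and $U_a$. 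In the GHP/spinor formalism this amounts to differentiating the type~D Weyl spinor along $\nu$ and projecting onto spin weight $\pm1$, which I expect to be the most efficient route.

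Next I would substitute the three variations into \eqref{eq:InvV} and reduce. The derivative term $\mathit{p}^2W^a\nabla_a(\mathit{p}^4\delta_\nu(\vartheta\Psi_2))$ and the divergence term $-\tfrac12\Re(\mathit{p}^6\delta_\nu(\vartheta\Psi_2)\nabla_aW^a)$ both generate terms in $\nabla_a\nu_b$ together with purely algebraic $\nu$-terms, the latter produced after applying the Leibniz rule and replacing $\nabla_aU_b$ by \eqref{eq:CDeU}. I would then show that the $\nabla\nu$-terms from these two contributions, from $\delta_\nu\mathcal{Z}_{ab}$, and from the metric term $-3\mathit{p}^6\Psi_2\,U^aW^b\nabla_{(a}\nu_{b)}$ cancel in pairs, and likewise that the remaining algebraic $\nu$-terms cancel, using the Killing-Yano normalization $\xi^c=\tfrac23 i\nabla_a\mathcal{Y}^{ca}$, the relation $U^a\mathcal{Y}_{ab}=-\tfrac12 i\xi_b$ from \eqref{eq:RealityCond}, and the fact that $V^a$, hence the algebraic structure of $W_a=2i\mathit{p}^{-3}V^b\mathcal{Y}_{ab}$, is built from a Killing field.

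I expect the decisive step to be the reality bookkeeping between the $\Re$- and $\Im$-projected terms, where the hypothesis \eqref{eq:OrthogonalityCond} enters. Writing $X_{ab}\equiv\mathit{p}^3U_{[a}W_{b]}$, condition \eqref{eq:OrthogonalityCond} states $\overline{X}_{ab}=-X_{ab}$, i.e. that this antisymmetric combination is purely imaginary; this is exactly what is needed so that the real part of the $\nabla_aW^a$-contribution aligns with the imaginary part of the $\mathcal{Z}_{ab}$-contribution and the residual term drops. Without \eqref{eq:OrthogonalityCond} a non-vanishing remainder linear in $\nu$ survives, which is why the condition is imposed; I would check that it holds for $V^a=\xi^a$ and $V^a=\zeta^a$. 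The full reduction is routine but lengthy, and I would confirm the cancellation with the \emph{xAct} computation already used for the letter. As an independent cross-check, one can note that $\InvSymb_V$ should arise as the linearization of the diffeomorphism-invariant non-linear scalar presented later, whose background value is constant, making gauge invariance manifest via \eqref{eq:Tgauge}.
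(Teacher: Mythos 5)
Your plan is correct and follows essentially the same route as the paper's own proof: you compute the pure-diffeomorphism variations of the three $\nu$-dependent constituents $\vartheta\Psi_2$, $\mathcal{Z}_{ab}$, $h_{ab}$ (your $\delta_\nu(\vartheta\Psi_2)=3\Psi_2 U_a\nu^a$ is exactly the paper's \eqref{eq:VarPsi2GaugeTensorEq1}, and your description of $\delta_\nu\mathcal{Z}_{ab}$ as a $\Psi_2$-weighted combination of $\nabla\nu$ contractions and algebraic $\nu$-terms matches the paper's formula for $\mathcal{Y}_b{}^c\mathcal{Z}_{ac}$), then substitute into \eqref{eq:InvV} and cancel using \eqref{eq:CDeU} and \eqref{eq:RealityCond}, with \eqref{eq:OrthogonalityCond} entering precisely where you place it, namely in the reality bookkeeping that lets the $\Im$-projected $\mathcal{Z}_{ab}$ contribution combine with the remaining terms. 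Your observation that \eqref{eq:OrthogonalityCond} says $\mathit{p}^3U_{[a}W_{b]}$ is purely imaginary, and your verification of it for $\xi^a$ and $\zeta^a$, likewise mirror the paper's use of the condition and its subsequent remark.
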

\begin{proof}
A consequence of the Killing equation gives $\Psi_{2} U^{a} V_{a}=0$.
For a pure diffeomorphism we get
\begin{subequations}
\begin{align}
\vartheta \Psi_{2}={}&3 \Psi_{2} U^{a} \nu_{a} \label{eq:VarPsi2GaugeTensorEq1},\\
\mathcal{Y}_{b}{}^{c} \mathcal{Z}_{ac}={}&- \tfrac{3}{2} \Psi_{2} \mathcal{Y}_{b}{}^{c} \nabla_{[a}\nu_{c]}
 + \tfrac{3}{2} \Psi_{2} \mathcal{Y}_{a}{}^{c} \nabla_{[b}\nu_{c]}\nonumber\\
& - 3 \Psi_{2} U_{[a}\mathcal{Y}_{b]}{}^{c}\nu_{c}
 - 3 \Psi_{2} U^{c}\mathcal{Y}_{[a|c|}\nu_{b]}.
\end{align}
\end{subequations}
This gives
\begin{align}
 \mathit{p}^6U^{a} W^{b} \mathcal{Z}_{ab}={}&
 -  \tfrac{3}{2}\mathit{p}^6\Psi_{2} U^{a} W^{b}  (\nabla_{[a}\nu_{b]}
  + 2 U_{[a}\nu_{b]})
\nonumber\\
&-3i \mathit{p}^3\Psi_{2} U^{a} V^{b} \mathcal{Y}_{a}{}^{c}  \nabla_{[b}\nu_{c]}.
\end{align}
The conditions \eqref{eq:RealityCond} and \eqref{eq:OrthogonalityCond} then imply
\begin{align}
& i \Im(\mathit{p}^6 U^{a} W^{b}  \mathcal{Z}_{ab})=\nonumber\\
&\hspace{6ex}- \tfrac{3}{2} \Psi_{2} U^{a} W^{b} \mathit{p}^6 (\nabla_{[a}\nu_{b]}
 + U_{[a}\nu_{b]}
 + \overline{U}_{[a}\nu_{b]}).
\end{align}
Together with \eqref{eq:VarPsi2GaugeTensorEq1}, we get for a pure diffeomorphism
\begin{align}
\mathbb{I}_{V}{}={}&-3 \Psi_{2} U^{a} U^{b} W_{a} \mathit{p}^6 \nu_{b}
 -  \tfrac{3}{4} \Psi_{2} U^{a} \mathit{p}^6 \nu_{a} \nabla_{b}W^{b}\nonumber\\
& -  \tfrac{3}{4} \bar\Psi_{2} \overline{U}^{a} \overline{\mathit{p}}^6 \nu_{a} \nabla_{b}\overline{W}^{b}
 + 3 \Psi_{2} W^{a} \mathit{p}^6 \nu^{b} \nabla_{(a}U_{b)}\nonumber\\
& + 3 \Psi_{2} U^{a} W^{b} \mathit{p}^6 (U_{[a}\nu_{b]}
 + \overline{U}_{[a}\nu_{b]}).
\end{align}
The relations \eqref{eq:CDeU}, \eqref{eq:OrthogonalityCond}, \eqref{eq:RealityCond} imply
\begin{subequations}
\begin{align}
W^{b} \nabla_{b}U_{a}={}&
2 U_{a} U^{b} W_{b}
 -  U_{b} U^{b} W_{a}
 - \Psi_{2} \Re(W_{a}),\hspace{-.6ex}\\
U^{b} \nabla_{b}W_{a}={}& 
2 U^{b} U_{(a}W_{b)}
 - \Psi_{2} \Re(W_{a}),\\
\hspace{-.3ex}\overline{\mathit{p}}^3\overline{U}_{a}  \nabla_{b}\overline{W}^{b}={}&-4 \Psi_{2} \Re(W_{a}) \mathit{p}^3
 -  U_{a} \mathit{p}^3 \nabla_{b}W^{b} \nonumber\\
&+ 4 (U^{b}  -  \overline{U}^{b}) \mathit{p}^3 U_{[a}W_{b]}.
\end{align}
\end{subequations}
Together, we get $\InvSymb_{V}{}=0$ for a pure diffeomorphism.
\end{proof}
\begin{remark}
For the case $V^a=\xi^a$ we get $W^{a} = - U^{a} \mathit{p}^{-1}$,
and for the case $V^a=\zeta^a$ we get
\begin{align}
W_{a} \mathit{p}^3={}&- \tfrac{1}{2} \overline{U}_{a} \mathit{p}^2 \overline{\mathit{p}}^2
 + \tfrac{1}{4} U_{a} \mathit{p}^2 (\mathit{p}^2
 + \overline{\mathit{p}}^2).
\end{align}
Both of these vectors satisfy the required condition \eqref{eq:OrthogonalityCond}, so $\InvSymb_{\xi}$ and $\InvSymb_{\zeta}$ are gauge invariant.
\end{remark}

\begin{corollary} \label{cor:InvSet}
A set of local gauge invariant quantities for perturbations of the Kerr spacetime is given by
\begin{subequations}\label{eq:GIset}
\begin{align} 
&\text{Teukolsky scalars} &&\vartheta \Psi_{0}, \vartheta \Psi_{4}, \label{eq:GI1}\\
&\text{Linearized Ricci} &&\dot{R}_{ab}=\dot{R}_{acb}{}^{c}, \label{eq:GI2} \\
&\text{Killing invariants} &&\InvSymb_{\xi}{}, \InvSymb_{\zeta}. \label{eq:GI5}
\end{align}
\end{subequations}
\end{corollary}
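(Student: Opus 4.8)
The plan is to verify each of the three groups in \eqref{eq:GIset} separately and to assemble the statement from results already in place. The Killing invariants $\InvSymb_\xi$ and $\InvSymb_\zeta$ require no further work: Proposition~\ref{prop:IvGI} shows that $\InvSymb_V$ is a local gauge invariant whenever the associated vector $W_a=2i\mathit{p}^{-3}V^b\mathcal{Y}_{ab}$ satisfies the orthogonality condition \eqref{eq:OrthogonalityCond}, and the subsequent remark checks that both $V^a=\xi^a$ and $V^a=\zeta^a$ produce vectors $W^a$ obeying that condition. Hence these two entries are invariant by direct appeal to that proposition.

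For the linearized Ricci tensor $\dot R_{ab}$ I would invoke the Stewart--Walker law \eqref{eq:Tgauge} applied to the background Ricci tensor. Since Kerr is a vacuum solution its Ricci tensor vanishes, so $\mathcal{L}_\nu R_{ab}=0$ and \eqref{eq:Tgauge} reduces to $\dot R_{ab}\to\dot R_{ab}$; this is the observation already recorded below \eqref{eq:Tgauge}, restated here in the present context.

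The substantive step is the invariance of the extreme Teukolsky scalars $\vartheta\Psi_0$ and $\vartheta\Psi_4$. I would apply \eqref{eq:Tgauge} to the background Weyl spinor $\Psi_{ABCD}$, which in a principal spinor dyad $(o^A,\iota^A)$ adapted to the tetrad has, for the Petrov type~D Kerr geometry, only the middle component, $\Psi_{ABCD}=6\Psi_2\,o_{(A}o_B\iota_C\iota_{D)}$. Using the identifications $\vartheta\Psi_0=\dot\Psi_0$ and $\vartheta\Psi_4=\dot\Psi_4$ --- for which, unlike $\vartheta\Psi_1,\vartheta\Psi_3$, no linearized tetrad compensation enters --- the gauge variation of $\vartheta\Psi_0$ is the $o^Ao^Bo^Co^D$ contraction of $\mathcal{L}_\nu\Psi_{ABCD}$ and that of $\vartheta\Psi_4$ the $\iota^A\iota^B\iota^C\iota^D$ contraction. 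The key algebraic fact is $\Psi_{ABCD}o^Bo^Co^D=0$ and $\Psi_{ABCD}\iota^B\iota^C\iota^D=0$: contracting three dyad legs of one kind into $o_{(A}o_B\iota_C\iota_{D)}$ forces, by a pigeonhole count on the two available slots of the opposite kind, at least one vanishing factor $o\cdot o$ or $\iota\cdot\iota$. Differentiating the background identities $0=\mathcal{L}_\nu(\Psi_{ABCD}o^Ao^Bo^Co^D)$ and $0=\mathcal{L}_\nu(\Psi_{ABCD}\iota^A\iota^B\iota^C\iota^D)$ by the Leibniz rule and using these facts leaves only the extreme contractions of $\mathcal{L}_\nu\Psi_{ABCD}$, which therefore vanish. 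I expect this last point to be the main obstacle, since it requires the spinor Lie-derivative bookkeeping to be carried out fully covariantly so that the variations of the dyad legs are correctly absorbed --- precisely the role of the covariant definition of $\vartheta\Psi_{ABCD}$ --- whereas the Ricci and Killing-invariant entries follow immediately from the cited results.
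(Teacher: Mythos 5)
Your proposal is correct and follows essentially the same route as the paper, which states the corollary without a separate proof by combining Proposition~\ref{prop:IvGI} for $\InvSymb_\xi,\InvSymb_\zeta$, the Stewart--Walker observation \eqref{eq:Tgauge} for $\dot R_{ab}$ on a vacuum background, and the classical (well-known) Stewart--Walker type-D argument for $\vartheta\Psi_0,\vartheta\Psi_4$, which you simply write out explicitly via the pigeonhole identity $\Psi_{ABCD}o^Bo^Co^D=0=\Psi_{ABCD}\iota^B\iota^C\iota^D$. Your worry about spinor Lie-derivative bookkeeping is harmless: any ambiguity in $\mathcal{L}_\nu o^A$ multiplies exactly those vanishing three-leg contractions, and one can equally phrase the whole step tensorially using $\vartheta\Psi_0=\dot R_{lmlm}$, $\vartheta\Psi_4=\dot R_{\bar m n\bar m n}$ with the well-defined $\mathcal{L}_\nu R_{abcd}$.
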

Note that \eqref{eq:GI1}, \eqref{eq:GI2} depend on up to second derivatives of linearized metric, while \eqref{eq:GI5} depends on up to third derivatives.
\begin{theorem}[\cite{future:CompComplex}] \label{thm:CompleteSet}
The set of gauge invariants in corollary~\ref{cor:InvSet} is minimal and generates all local gauge invariants for perturbations of the Kerr spacetime with $\Rotation \neq 0$.
\end{theorem}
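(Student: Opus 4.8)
The plan is to recast the statement in the language of compatibility complexes, following \cite{CompatComplexKhavkine,2014arXiv1409.7212K}. Writing the Killing operator as $K\colon \nu^a \mapsto \nabla_{(a}\nu_{b)}$, the gauge transformation \eqref{eq:hgauge} is $h \mapsto h + 2K\nu$, so a local gauge invariant is precisely a linear differential operator $I$ acting on $h_{ab}$ that annihilates the image of $K$, i.e.\ $I\circ K = 0$. The central fact I would invoke is that the first \emph{compatibility operator} $K_1$ in the compatibility complex of $K$ is universal among such operators: it satisfies $K_1\circ K = 0$, and every $I$ with $I\circ K=0$ factors as $I = A\circ K_1$ for some differential operator $A$. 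Consequently the components of $K_1 h$ automatically form a generating set of gauge invariants, and the whole theorem reduces to (i) computing $K_1$ explicitly on the Kerr background, (ii) identifying the invariants \eqref{eq:GIset} with a full set of generators of $K_1 h$, and (iii) proving minimality.

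For step (i) I would construct the compatibility complex of $K$ on Kerr by prolongation: differentiate the equation $\nabla_{(a}\nu_{b)} = \tfrac12 h_{ab}$, commute derivatives using the background curvature, and organize the resulting integrability conditions into an involutive system. The obstructions to solving for $\nu$ in terms of $h$ are exactly the components of $K_1 h$. Here the algebraically special type-$D$ structure is essential: I would use the conformal Killing--Yano tensor $\mathcal{Y}_{ab}$ and the adapted scalars $\mathit{p}, U_a, \Psi_2$ from \eqref{eq:RealityCond}--\eqref{eq:CDeU} to decompose the output bundle into irreducible spin-weighted pieces. I expect the extreme spin-weight $\pm2$ pieces to reproduce the Teukolsky scalars $\vartheta\Psi_0,\vartheta\Psi_4$, the trace piece to reproduce $\dot R_{ab}$ (which is gauge invariant by the Stewart--Walker lemma \eqref{eq:Tgauge} on vacuum), and the remaining low-spin pieces associated with the two Killing directions $\xi^a,\zeta^a$ to reproduce $\InvSymb_\xi,\InvSymb_\zeta$ from proposition~\ref{prop:IvGI}.

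For step (ii), once the components of $K_1 h$ are in hand, I would exhibit explicit differential operators expressing each listed invariant in terms of $K_1 h$ and, conversely, recovering every component of $K_1 h$ from \eqref{eq:GIset}; the calculations of proposition~\ref{prop:IvGI} already supply the $\InvSymb_V$ side. The generating property then follows from the factorization property above, \emph{provided} the complex is formally exact at the $K_1$ node, i.e.\ $\ker K_1 = \operatorname{im} K$ locally. Proving this formal exactness is the cohomological heart of the argument and, I expect, the main obstacle: unlike the maximally symmetric case, Kerr admits only a two-dimensional isometry algebra and is not conformally flat, so the Spencer cohomology of the prolonged system must be computed by hand and shown to vanish in the relevant degree. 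The hypothesis $\Rotation\neq0$ enters precisely here: at $\Rotation=0$ the isometry algebra jumps to four dimensions and the symbol sequence degenerates, so the complex---and hence the count and form of the generators---changes.

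Finally, for minimality I would argue that no element of \eqref{eq:GIset} is a differential consequence of the others, equivalently that each corresponds to an irreducible summand of $K_1 h$ that is not reachable by differentiating the remaining generators. The cleanest route is a principal-symbol analysis of $K_1$ together with explicit test perturbations: for each invariant I would produce an $h_{ab}$ on which that invariant is nonzero while all the others vanish identically, which immediately shows that dropping it destroys the generating property. The linearized Kerr solutions varying $\Mass$ and the angular momentum are natural candidates to separate $\InvSymb_\xi$ and $\InvSymb_\zeta$ from the radiative Teukolsky content.
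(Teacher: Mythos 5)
Your overall strategy is the one the paper actually relies on: the letter proves nothing in-text for theorem~\ref{thm:CompleteSet} beyond minimality, deferring the generating property to \cite{future:CompComplex}, whose argument is built precisely on the universal compatibility complex of the Killing operator from \cite{CompatComplexKhavkine}. Within that shared strategy, however, you have a genuine conceptual gap at the step you yourself call the ``cohomological heart.'' You write that the factorization $I = A\circ K_1$ holds ``provided the complex is formally exact at the $K_1$ node, i.e.\ $\ker K_1 = \operatorname{im} K$ locally,'' thereby conflating two inequivalent notions. Universality of $K_1$ (every $I$ with $I\circ K = 0$ factors through $K_1$) is a statement about differential operators, i.e.\ formal exactness at the level of jets; $\ker K_1 = \operatorname{im} K$ on sections is local exactness, a sheaf-cohomological statement (tied to the Killing sheaf cohomology studied in \cite{2014arXiv1409.7212K}) that is neither needed for the generating property nor implied by it. If your proof really required local exactness you would be proving something stronger and harder than necessary. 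Moreover, in \cite{CompatComplexKhavkine} universality is obtained from the general theory of operators of finite type --- the Killing operator is of finite type, and the complex comes from its prolongation connection --- so no Kerr-specific Spencer cohomology computation ``by hand'' carries the argument. The genuinely Kerr-specific labor sits in your step (ii): the explicit operator equivalence between the abstractly constructed $K_1$ and the concrete set \eqref{eq:GIset}, which is also where the differential relations among the invariants alluded to in the paper enter.

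Your minimality argument is also under-specified. Exhibiting, for each invariant, a perturbation on which it alone is nonvanishing is indeed the paper's method, but the test solutions you name ($\dot\Mass$ and angular momentum variations) cannot separate the four real degrees of freedom of $\InvSymb_\xi, \InvSymb_\zeta$: on those perturbations both invariants are real, $\InvSymb_\xi = \dot\Mass$ and $\InvSymb_\zeta = 2\Rotation^2\dot\Mass - 3\Mass\Rotation\dot\Rotation$. The paper needs the full four-parameter Plebanski--Demianski family: the NUT ($\dot\Nut$) and c-metric ($\dot\Cpara$) variations produce genuinely complex values of the invariants, and only together do the four variations establish functional independence of all four real components, while remaining vacuum and type D (so that $\dot R_{ab}$, $\vartheta\Psi_0$, $\vartheta\Psi_4$ stay off). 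Separately, excluding lower-order redundancy requires the paper's additional argument that second-order invariants are exhausted by linearized curvature components, none of which carries the $\dot\Mass, \dot\Rotation, \dot\Nut, \dot\Cpara$ variations. Finally, your reading of the hypothesis $\Rotation \neq 0$ is slightly off: the complex does not fail at $\Rotation = 0$; rather, on Schwarzschild extra lower-order invariants appear (e.g.\ the Regge--Wheeler variable $\Im\vartheta\Psi_2$ is already gauge invariant at second order and generates parts of the Killing invariants), so the set \eqref{eq:GIset} would lose minimality in its stated form, the minimal count becoming $19$ rather than $18$.
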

Arguments for minimality are given below and for a proof of the theorem we refer to \cite{future:CompComplex}.

It should be noted that a generating set of gauge invariants can degenerate if restricted to more special backgrounds, in the sense that certain components of the set can be derived from more elementary gauge invariants. Examples are the second order Regge-Wheeler variable on Schwarzschild and the linearized curvature components of spin-0 and 1 on Minkowski. 

Also the spherical Killing vectors on Schwarzschild satisfy condition \eqref{eq:OrthogonalityCond}, and therefore lead to gauge invariants. On the other hand, the Regge-Wheeler variable $\Im\vartheta\Psi_2$ is gauge invariant and hence certain real or imaginary parts of Killing invariants can be generated from it.

\begin{remark} \label{rem:ConnectionToImA}
In \cite{2016arXiv160106084A} we derived a covariant version of the Teukosky-Starobinski identities. In these identities a real, gauge invariant vector field $\Im\mathcal{A}^c$ appears naturally. It can be expressed in terms of the invariants \eqref{eq:GIset}, for example $\Im\mathcal{A}^{a} V_{a}={}- \frac{1}{81} \Im\InvSymb_V$ for both isometries in the source-free case. This partly initiated the systematic search for gauge invariants.
\end{remark}

\begin{remark} \label{rem:ConnectionToMerlinEtAl}
 Merlin et. al. found three real gauge invariants, \cite{PhysRevD.94.104066}, in a coordinate based construction. They are related to  $\Re \InvSymb_\xi $, $\Re\InvSymb_\zeta$ and $\Im \InvSymb_\xi $ via
\begin{subequations}\label{eq:MerlinInv}
\begin{align}
\mathcal{I}_{1}{}={}&- \frac{2 \bigl(Re(\InvSymb_\zeta) + r^2 Re(\InvSymb_\xi)\bigr) (r^2+ \Rotation^2 x^2)}{3 \Mass (r^2 - 2\Mass r + \Rotation^2)^2},\\
\mathcal{I}_{2}{}={}&- \frac{2 \bigl(Re(\InvSymb_\zeta) -  \Rotation^2x^2 Re(\InvSymb_\xi) \bigr) (r^2+ \Rotation^2 x^2)}{3 \Mass \Rotation^2 (1- x^2)},\\
\mathcal{I}_{3}{}={}&\frac{Im(\InvSymb_\xi) (r^2+ \Rotation^2 x^2)^2}{3 \Mass \Rotation (r^2 - 2\Mass r + \Rotation^2) \sqrt{1 - x^2}}.
\end{align}
\end{subequations}
\end{remark}

\medskip
\addcontentsline{toc}{section}{Type D variations and independence of gauge invariants}
\emph{Type D variations and independence of gauge invariants.---}
We use the Plebanski-Demianski solution \cite{1976AnPhy..98...98P} in vacuum. In coordinates $(t,r,x=\cos\theta, \phi)$ define the Newman-Penrose tetrad
\begin{subequations} 
\begin{align}
l ={}&\frac{(1 -  \Cpara r x)}{\sqrt{2\Sigma\Delta_{r}}}  
\big( (r^2+\Rotation^2)\partial_t +   \Delta_{r} \partial_r   + \Rotation \partial_\phi \big) ,\\
n ={}&\frac{(1 -  \Cpara r x)}{\sqrt{2\Sigma\Delta_{r}}}  
\big( (r^2+\Rotation^2)\partial_t -   \Delta_{r} \partial_r   + \Rotation \partial_\phi \big) ,\\
m ={}&\frac{(1 -  \Cpara r x)}{\sqrt{2\Sigma\Delta_{x}}} 
\big( i \Rotation (1-x^2)\partial_t  -\Delta_{x} \partial_x + i \partial_\phi \big),\\
\Delta_{x}{}={}&1 + 2 \Nut \Rotation^{-1} x -  x^2 + 2 \Cpara \Mass x^3 -  \Cpara^2 a^2 x^4,\\
\Delta_{r}{}={}& \Rotation^2 - 2 \Mass r + r^2 - 2 \Cpara \Nut \Rotation^{-1} r^3 -  \Cpara^2 r^4, \\
 \Sigma={}&r^2 + \Rotation^2 x^2, \label{eq:Deltar}
\end{align}
\end{subequations} 
with parameters $\Mass, \Nut, \Rotation, \Cpara$ for mass, NUT charge, angular momentum and c-metric, respectively. A variation in each of the parameters leads to specific values of the invariants showing their functional independence.

For pure mass  ($\dot \Mass$) and angular momentum ($\dot \Rotation$) perturbations, the invariants take the form
\begin{subequations}
\begin{align}
\InvSymb_\xi={}&\dot \Mass,&
\InvSymb_\zeta={}&2 \Rotation^2 \dot \Mass 
 - 3 \Mass \Rotation \dot \Rotation,
\end{align}
while perturbing in direction of the NUT ($\dot \Nut$) yields
\begin{align}
\InvSymb_\xi={}&-i \dot \Nut
 + \frac{2i \Mass }{\overline{\mathit{p}}}\dot \Nut,\\
\InvSymb_\zeta={}&-i \Rotation^2 \dot \Nut
 + \Rotation x (r - 2 \Mass 
 -  \frac{\Mass \mathit{p}}{\overline{\mathit{p}}}) \dot \Nut,
\end{align}
and perturbing in the c-metric direction ($\dot \Cpara$) gives
\begin{align}
\InvSymb_\xi={}&\frac{6 \Mass^2 r x}{\overline{\mathit{p}}}\dot \Cpara 
 + 3 \Mass  \bigl(i \Rotation 
 + (\Mass -  r) x\bigr)\dot \Cpara,\\
\InvSymb_\zeta={}&\frac{6 \Mass^2 \Rotation^2  r x^3}{\overline{\mathit{p}}}\dot \Cpara 
 - 3i \Mass \Rotation  (\mathit{p}^2
 -  r^2 x^2)\dot \Cpara. \label{eq:IinvzetaVarcmetric}
\end{align}
\end{subequations}
Observe that $\InvSymb_\xi,\InvSymb_\zeta$ are real for $\dot \Mass, \dot \Rotation$ perturbations, but complex for $\dot \Nut$ and $\dot \Cpara$ perturbations. 
From the explicit form above we conclude that the four real degrees of freedom of $\InvSymb_\xi,\InvSymb_\zeta$ are functionally independent. 
Furthermore, there are algebraically special frequency solutions, see e.g. \cite{chandrasekhar:MR1210321}, turning on only one of $\vartheta \Psi_{0}$, $\vartheta \Psi_{4}$. 
Similarly, metric perturbations turning on specific components of the linearized Ricci tensor are possible to construct by a linearized conformal transformation. Hence, we have a sequence of solutions turning on one invariant after the other. This motivates why all 18 invariants are needed on Kerr with $\Rotation \neq 0$. Even though the gauge invariants are independent in this way, they will satisfy a set of differential compatibility equations. These relations will be stated and used in \cite{future:CompComplex} for the proof of theorem \ref{thm:CompleteSet}.

One can argue that components of the linearized curvature are the only possible gauge invariants of second order, and that no gauge invariant curvature component carries the $\dot \Mass, \dot \Rotation$, $\dot \Nut$ and $\dot \Cpara$ perturbations. The differential order of $\InvSymb_\xi,\InvSymb_\zeta$ is therefore minimal.

\medskip
\addcontentsline{toc}{section}{GHP form of gauge invariants}
\emph{GHP form of gauge invariants.---}
In a principal tetrad the Killing invariants take the GHP \cite{GHP} form 
\begin{align}
\InvSymb_\xi={}&- \mathit{p} (\rho'\tho
 + \rho\tho'
 -  \tau'\edt
 -  \tau\edt' )(\mathit{p}^4\vartheta \Psi_{2})
 -  \tfrac{1}{2} \Psi_{2} \mathit{p}^5 \vartheta \Psi_{2}\nonumber\\
& -  \tfrac{1}{2} \bar\Psi_{2} \overline{\mathit{p}}^5 \overline{\vartheta \Psi}_{2}
 + \tfrac{3}{2} \Psi_{2} \mathit{p}^5 (h_{nn} \rho^2
 + 2 h_{ln} \rho \rho'
 + h_{ll} \rho'^2\nonumber\\
&{}\hspace{2ex}{} - 2 h_{n\bar{m}} \rho \tau
 - 2 h_{l\bar{m}} \rho' \tau
 + h_{\bar{m} \bar{m}} \tau^2
 - 2 h_{nm} \rho \tau'\nonumber\\
&{}\hspace{2ex}{} - 2 h_{lm} \rho' \tau'
 + 2 h_{m\bar{m}} \tau \tau'
 + h_{mm} \tau'^2),
\end{align}
and with $\mathit{p}_+ = \mathit{p} + \overline{\mathit{p}}, \mathit{p}_- = \mathit{p} - \overline{\mathit{p}}$,
\begin{align}
\InvSymb_\zeta={}&\tfrac{1}{4} \mathit{p} \big( \mathit{p}_-^2 (\rho'\tho
 + \rho\tho' ) -  \mathit{p}_+^2 (\tau'\edt
  + \tau\edt' ) \big)(\mathit{p}^4\vartheta \Psi_{2})\nonumber\\
& + \tfrac{1}{4} \Re\Bigl(\mathit{p}^5 \vartheta \Psi_{2} \bigl(
 \Psi_{2} \mathit{p}_+\mathit{p}_- -2 \bar\Psi_{2} \overline{\mathit{p}}^2
 - 4 \mathit{p} (\mathit{p}_- \rho \rho' \nonumber\\
 &{}\hspace{2ex}{} -\mathit{p}_+ \tau \tau')\bigr)\Bigr)
 + 2i \Im\bigl(\mathit{p}^6 \overline{\mathit{p}} (\vartheta \Psi_{3} \rho \tau
 + \vartheta \Psi_{1} \rho' \tau')\bigr)\nonumber\\
 & -  \tfrac{3}{8} \Psi_{2} \mathit{p}^5 \bigl( 
 \mathit{p}_-^2 (h_{nn} \rho^2  +  2 h_{ln} \rho\rho'  + h_{ll} \rho'^2)\nonumber\\
 &{}\hspace{2ex}{} -2 \mathit{p}_+\mathit{p}_- (h_{n\bar{m}} \rho \tau
   + h_{l\bar{m}} \rho' \tau
  + h_{nm} \rho \tau' + h_{lm} \rho' \tau')\nonumber\\
 &{}\hspace{2ex}{} + \mathit{p}_+^2 (h_{\bar{m} \bar{m}} \tau^2
  +  2 h_{m\bar{m}} \tau\tau' + h_{mm} \tau'^2) \bigr).
\end{align}

\medskip
\addcontentsline{toc}{section}{Curvature invariants}
\emph{Curvature invariants.---}
The scalars $\InvSymb_\xi, \InvSymb_\zeta$ can be derived from linearizations of tensors built from the curvature and its derivatives in the full theory. On a general vacuum spacetime with anti-self dual Weyl curvature $\mathcal{C}_{abcd}=\tfrac{1}{2} C_{abcd} + \tfrac{1}{2}i {}^*C_{abcd}$ define the curvature invariants
\begin{align} 
\widehat{\mathcal{I}}={}&\tfrac{1}{24} \mathcal{C}_{abcd} \mathcal{C}^{abcd},\label{eq:Iinv}&
\mathcal{I}={}& \widehat{\mathcal{I}}^{1/6}.
\end{align}
Furthermore, define the complex curvature invariant
\begin{align} \label{eq:MassInvariant}
\mathbb{M}={}& -\mathcal{I}^{-4} (\nabla_a \mathcal{I})(\nabla^a \mathcal{I}) + \mathcal{I}  + \overline{\mathcal{I}}.
\end{align}
On a Kerr spacetime in a principal tetrad we find
$\hat{\mathcal{I}} = \Psi_2^2$ and $\mathbb{M}$ turns out to be the real constant $\mathbb{M}_{\text{Kerr}} =  \Mass^{- 2/3}$.
Due to \eqref{eq:Tgauge}, it follows that the variation of \eqref{eq:MassInvariant} around Kerr is gauge invariant and a lengthy calculation shows
\begin{align}
 \dot{\mathbb{M}} ={}& - \tfrac{2}{3} \Mass^{-5/3} \InvSymb_\xi.
\end{align}
Similarly define the real scalar curvature invariant
\begin{align}
\mathbb{A}={}&
 \frac{-1}{\vert\mathcal{I}\vert^{4}} \Bigl(\nabla_a \frac{\mathcal{I}}{\overline{\mathcal{I}}}\Bigr) \Bigl(\nabla^a \frac{\overline{\mathcal{I}}}{\mathcal{I}}\Bigr)
-2 \Im\Bigl(\frac{1}{\mathcal{I}^{2}}\Bigr) \Im(2 \mathcal{I}
 -  \mathbb{M}).
\end{align}
In the background it reduces to $\mathbb{A}_{\text{Kerr}} =4 \Rotation^2  \Mass^{-4/3}$. The variation of $\mathbb{A}$ around a Kerr background shows
\begin{align}
\dot{\mathbb{A}}={}&- \tfrac{8}{3} \Mass^{- 7/3} \Re\InvSymb_\zeta.
\end{align}
To express  $\Im \InvSymb_\zeta$, define the real, symmetric, trace-free two tensor
\begin{align}
T_{ac}={}&\Im \left( \mathcal{I}^8 \overline{\mathcal{I}}^5 \overline{\mathcal{C}}_{(a}{}^{b}{}_{c)}{}^{d}(\nabla_{b}\mathcal{I})(\nabla_{d}\overline{\mathcal{I}}) \right),
\end{align}
which equals \cite[eq.(18)]{ferrando:saez:2009CQGra..26g5013F} up to a constant. 
On a general type D spacetime it has the non-trivial factor $\bar\rho \tau + \rho \bar\tau'$ which is zero in the Kerr case. Variation around Kerr and contraction into $\xi^a$ leads to
\begin{align}
\dot T_{ac} \xi^{a} \xi^{c} ={}&\frac{\Psi_{2}^5 \mathit{p}^4}{\overline{\mathit{p}}^{11}} 
\left( \xi^c\zeta_c \Im\InvSymb_\xi - \xi^c \xi_c \Im \InvSymb_\zeta \right).
\end{align}

\medskip
\addcontentsline{toc}{section}{Conclusions}
\emph{Conclusions.---}
In this letter we introduced two complex scalar gauge invariants for perturbations of Kerr spacetime. Together with the Teukolsky scalars and the Ricci tensor they form a minimal generating set of 18 real scalar invariants. A similar construction on a Schwarzschild background leads to a set of 19 real scalar invariants and for Minkowski space, it is known to consist of the 20 real scalar components of the linearized curvature tensor, see e.g. \cite{2014arXiv1409.7212K}. 
Whether there is a relation between the minimal number of generators for gauge invariants and the number of parameters of the background, also in other spacetimes, is yet unclear.

We would also like to point out that the invariants as defined in \eqref{eq:InvV} directly depend on the background isometries. The alternative definition in terms of curvature invariants does not make explicit use of this structure and may be interesting for tracking type D parameter variations in numerical evolution as well as for higher order self-force problems.
It would also be interesting to analyze the set of gauge invariants from the perspective of the black hole stability problem. Assuming that the Teukolsky scalars are under control, relations to the other invariants can be analyzed without gauge fixing and yield additional flexibility for the integration of the remaining field equations after gauge fixing. 

The geometric background and the full proof of theorem~\ref{thm:CompleteSet}, will be given in~\cite{future:CompComplex}.

\medskip
\addcontentsline{toc}{section}{Acknowledgements}
\emph{Acknowledgements.---}
We thank B. Whiting, L. Andersson and I. Khavkine for many enlightening discussions and in particular B. Whiting for bringing the problem of finding additional gauge invariants to our attention. We thank M. van~de Meent for pointing out the paper \cite{PhysRevD.94.104066}, and for discussions. S.A. thanks Princeton University for hospitality and financial support.
\vspace{-3ex}

\newcommand{\arxivref}[1]{\href{http://www.arxiv.org/abs/#1}{{arXiv.org:#1}}}
\newcommand{\mnras}{Monthly Notices of the Royal Astronomical Society}

%\bibliography{gr6}
%merlin.mbs apsrev4-1.bst 2010-07-25 4.21a (PWD, AO, DPC) hacked
%Control: key (0)
%Control: author (0) dotless jnrlst
%Control: editor formatted (1) identically to author
%Control: production of article title (0) allowed
%Control: page (1) range
%Control: year (0) verbatim
%Control: production of eprint (0) enabled
%

\bigskip 

\end{document}